\documentclass[aps,pra,reprint,superscriptaddress,longbibliography]{revtex4-2}

\usepackage[T1]{fontenc}
\usepackage{lmodern}
\usepackage{amsmath,amssymb,amsthm,mathtools,bm}
\usepackage{physics}
\usepackage{braket}
\usepackage{hyperref}
\hypersetup{colorlinks=true,citecolor=blue,linkcolor=blue,urlcolor=blue}

\newtheorem{theorem}{Theorem}
\newtheorem{proposition}{Proposition}
\newtheorem{lemma}{Lemma}
\newtheorem{corollary}{Corollary}
\theoremstyle{remark}
\newtheorem{remark}{Remark}

\begin{document}
	
	\title{Trace-to-Hilbert-Schmidt Speed Ratio in Quantum Dynamics: Universal Bounds and Effective Rank}
	
	\author{Hossein Rangani Jahromi\\
		\textit{Physics Department, Faculty of Sciences,
			Jahrom University, P.B. 74135111, Jahrom, Iran}}

	\begin{abstract}
		We address the ratio between the trace speed and the Hilbert-Schmidt speed for differentiable finite-dimensional quantum states, $\mathcal R=\|\partial_\phi\rho\|_1/\|\partial_\phi\rho\|_2$. Since the tangent $\partial_\phi\rho$ is Hermitian and traceless, $\mathcal R$ obeys stronger bounds than those for generic operators. For any nonzero tangent of rank $r$, we prove the sharp bounds $\sqrt{2}\le\mathcal R\le\sqrt{r}$ for even $r$ and $\sqrt{2}\le\mathcal R\le\sqrt{r-1/r}$ for odd $r$, characterizing all equality cases. Nonstationary pure-state and qubit families saturate the lower bound $\mathcal R=\sqrt2$. For odd Hilbert-space dimension $d$, we further prove the sharp global maximum $\mathcal R\le\sqrt{d-1/d}$. Interpreting $\mathcal R^2$ as the participation ratio of the singular-value distribution yields an effective-rank picture, $r_{\mathrm{eff}}=\mathcal R^2$. We decompose the effective rank into classical eigenvalue and quantum eigenvector contributions and obtain the bound $r_{\mathrm{eff}}\le r_C+r_Q$, with equality when either component vanishes. Linking the effective rank to the quantum Fisher information $F$ gives $r_{\mathrm{eff}}\ge 8\,\mathrm{TS}^2/F$, showing that a large effective rank is required when this lower bound substantially exceeds the universal minimum value $2$. Finally, a hierarchy of quantum speed limits shows how the effective rank controls the tightness of bounds expressed through the Hilbert-Schmidt speed.
	\end{abstract}
	
	\maketitle
	
	\section{Introduction}
	
	Local speeds quantify infinitesimal changes of quantum states and, depending on the underlying norm or metric, provide distinct geometric or statistical notions of local distinguishability relevant to quantum metrology, quantum geometry, and quantum speed limits \cite{BraunsteinCaves1994,Paris2009,GessnerSmerzi2018,DeffnerCampbell2017,PoggiCampbellDeffner2021}. Different choices of norm or metric lead to inequivalent notions of statistical speed, each emphasizing a different aspect of the underlying dynamics. In particular, the trace norm and the Hilbert--Schmidt norm give rise to two norm-induced local speeds for a differentiable family of density operators $\rho(\phi)$:
	\[
	\mathrm{TS}(\phi)=\frac{1}{2}\|\partial_\phi\rho(\phi)\|_1,
	\qquad
	\mathrm{HSS}(\phi)=\frac{1}{2}\|\partial_\phi\rho(\phi)\|_2.
	\]
\par
	
	The trace-norm speed, which we refer to as the trace speed (TS), is directly associated with the trace distance, the standard operational measure of state distinguishability. The Hilbert-Schmidt speed (HSS) is particularly attractive because it can be evaluated with much less computational effort than the trace speed, while still capturing essential dynamical features. These complementary strengths have made both speeds valuable tools: they have been successfully applied as effective witnesses of non-Markovianity and as figures of merit for phase estimation in multi-qubit systems \cite{Jahromi2020HSSNM,JahromiLoFranco2021,Alizadeh2025}.
	\par
	Despite the broad use of norm-based speeds, the relation between trace and Hilbert--Schmidt local speeds has not, to our knowledge, been systematically characterized at the level of density-operator tangents. This point is crucial because the tangent operator
	\[
	X(\phi)=\partial_\phi\rho(\phi)
	\]
	is not a generic matrix: it is always Hermitian and traceless. These elementary constraints strongly restrict its singular-value structure. As a result, norm inequalities that are optimal for arbitrary operators need not be optimal for physically admissible quantum-state tangents. The aim of this work is to exploit this structure and derive sharp, state-independent constraints on the ratio between the two local speeds.
	
	We therefore study the dimensionless quantity
	\[
	\mathcal R(\phi)=\frac{\mathrm{TS}(\phi)}{\mathrm{HSS}(\phi)}
	=\frac{\|\partial_\phi\rho(\phi)\|_1}{\|\partial_\phi\rho(\phi)\|_2}.
	\]
	Since $X(\phi)$ is Hermitian, its singular values are the absolute values of its eigenvalues. Hence $\mathcal R(\phi)$ depends only on the singular-value profile of the instantaneous tangent operator and not on its overall scale. In this sense, $\mathcal R$ isolates the spectral structure of the local motion.
	
	Our main result establishes that the Hermitian traceless structure of density-operator tangents enforces sharp universal bounds on this ratio. For every nonzero tangent operator $X=\partial_\phi\rho$ of rank $r$, we prove
	\[
	\sqrt{2}\le \frac{\|X\|_1}{\|X\|_2}\le
	\begin{cases}
		\sqrt r, & r\ \text{even},\\[6pt]
		\sqrt{\,r-\dfrac{1}{r}\,}, & r\ \text{odd},
	\end{cases}
	\]
	and we completely characterize all equality cases. The lower bound is strictly stronger than the generic operator bound $1\le \|X\|_1/\|X\|_2$ and arises from the balance between positive and negative spectral weight imposed by tracelessness. The upper bound is saturated exactly when all nonzero singular values are equal for even rank, or when the positive and negative eigenvalues are each constant and have multiplicities differing by one for odd rank. We further show that every nonstationary pure-state family has rank-two tangent and therefore satisfies $\mathcal R=\sqrt{2}$, and that the same rigidity holds for all nonstationary qubit families. In odd Hilbert-space dimension $d$, we obtain the sharper global bound
	\[
	\mathcal R\le \sqrt{d-\frac{1}{d}},
	\]
	which shows that the algebraic ceiling $\sqrt d$ allowed for a generic full-rank operator is unattainable for traceless Hermitian tangents when $d$ is odd.
	
	Beyond these sharp inequalities, the ratio $\mathcal R$ admits a natural interpretation in terms of the distribution of singular values. Indeed, if $\sigma_i$ are the nonzero singular values of $X$, then
	\[
	\mathcal R^2=\frac{\bigl(\sum_i \sigma_i\bigr)^2}{\sum_i \sigma_i^2}
	\]
	is the participation ratio of the normalized singular-value distribution. This suggests interpreting $\mathcal R^2$ as an effective rank of the local dynamics, namely, as a measure of how many singular modes contribute appreciably to the instantaneous motion. From this viewpoint, the universal lower bound means that every nonstationary quantum-state tangent necessarily involves at least two effective modes, while larger values of $\mathcal R$ indicate more broadly distributed tangent spectra.
	
	This effective-rank viewpoint also leads to direct physical consequences. First, by decomposing the tangent operator into diagonal and off-diagonal parts in the instantaneous eigenbasis of $\rho(\phi)$, we separate classical contributions associated with eigenvalue variation from quantum contributions associated with eigenvector variation. Second, in the case of unitary parameter encoding, we derive an inequality relating the effective rank to the quantum Fisher information (QFI), the fundamental quantity governing the ultimate precision of parameter estimation \cite{BraunsteinCaves1994,Paris2009,Helstrom1976,Holevo1982}. This shows that, at fixed trace speed, smaller QFI increases the lower bound on the effective rank, and can force it to be correspondingly larger when that bound rises substantially above the universal minimum. Third, because quantum speed limits are controlled by local rates of change and path lengths in state space \cite{MandelstamTamm1945,GiovannettiLloydMaccone2003,Taddei2013,delCampo2013,DeffnerLutz2013,DeffnerCampbell2017,PoggiCampbellDeffner2021}, the identity
	\[
	\mathrm{TS}=\sqrt{r_{\mathrm{eff}}}\,\mathrm{HSS}
	\]
	provides a direct comparison between trace-speed-based and Hilbert--Schmidt-speed-based speed-limit estimates, with the gap controlled exactly by the effective rank $r_{\mathrm{eff}}=\mathcal R^2$.
	
	The paper is organized as follows. Section~\ref{sec:preliminaries} introduces notation and basic structural facts about density-operator tangents. Section~\ref{sec:bounds} proves the universal bounds on $\mathcal R$ and characterizes the saturation conditions. Section~\ref{sec:pure} treats pure-state dynamics and the unitary specialization. Section~\ref{sec:effective} develops the effective-rank interpretation and discusses the qubit case. Section~\ref{sec:consequences} derives the decomposition into classical and quantum contributions, the relation to the QFI in the unitary setting, and the speed-limit comparison. Section~\ref{sec:odd} proves the sharp odd-dimensional maximum. Section~\ref{sec:discussion} concludes.

	\section{Preliminaries and realizability of tangent operators}
	\label{sec:preliminaries}
	
	Let $\mathcal H\simeq\mathbb C^d$ be a finite-dimensional Hilbert space, and let $\rho(\phi)$ be a $C^1$ family of density operators depending on a real parameter $\phi$. Thus, for every $\phi$,
	\begin{equation}
		\rho(\phi)^\dagger=\rho(\phi), \qquad \rho(\phi)\ge0, \qquad \Tr \rho(\phi)=1.
	\end{equation}
	The tangent operator is defined as
	\begin{equation}
		X(\phi):=\partial_\phi\rho(\phi).
	\end{equation}
	Differentiating the Hermiticity and trace conditions gives
	\begin{equation}
		X(\phi)^\dagger=X(\phi), \qquad \Tr X(\phi)=0.
		\label{eq:HermTraceless}
	\end{equation}
	Hence $X(\phi)$ is always Hermitian and traceless. At stationary points, where $X(\phi)=0$, both speeds vanish and the ratio $\mathcal R$ is undefined. In all that follows we restrict attention to nonstationary points, i.e., $X(\phi)\neq0$.
	
	The key structural observation is that tangent operators of density matrices occupy a much smaller class than arbitrary matrices. This simple observation is the source of the strengthened lower bound derived below.
	
	For any operator $A$, we denote its singular values (including zeros when present) by $\sigma_i(A)$. The Schatten norms are then
	\begin{equation}
		\norm{A}_1 = \sum_i \sigma_i(A), \qquad
		\norm{A}_2 = \sqrt{\sum_i \sigma_i(A)^2}.
	\end{equation}
	When convenient, we sum only over the $r=\rank(A)$ nonzero singular values. Because $X(\phi)$ is Hermitian, its singular values are the absolute values of its eigenvalues. If $\lambda_1,\dots,\lambda_d$ are the eigenvalues of $X$, then
	\begin{equation}
		\norm{X}_1 = \sum_{i=1}^d |\lambda_i|, \qquad
		\norm{X}_2 = \sqrt{\sum_{i=1}^d \lambda_i^2},
	\end{equation}
	and therefore
	\begin{equation}
		\mathcal R(X)^2 = \frac{\bigl(\sum_{i=1}^d |\lambda_i|\bigr)^2}{\sum_{i=1}^d \lambda_i^2}.
		\label{eq:Rlambda}
	\end{equation}
	Thus the problem reduces completely to the study of nonzero traceless Hermitian matrices.
	
	A useful structural fact is that every traceless Hermitian operator can be realized locally as the tangent of a legitimate density-operator family.
	
	\begin{lemma}
		\label{lem:realization}
		Let $X$ be any traceless Hermitian operator on $\mathcal H$. Then there exist a full-rank density operator $\rho_0$ and $\varepsilon>0$ such that
		\begin{equation}
			\rho(\phi)=\rho_0+\phi X
		\end{equation}
		is a density operator for all $|\phi|<\varepsilon$. In particular,
		\begin{equation}
			\partial_\phi\rho(\phi)\big|_{\phi=0}=X.
		\end{equation}
	\end{lemma}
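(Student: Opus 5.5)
The plan is to take $\rho_0$ to be the maximally mixed state $\rho_0=\mathbb{1}/d$. This is a full-rank density operator, with all eigenvalues equal to $1/d>0$. The family $\rho(\phi)=\rho_0+\phi X$ then needs three properties checked: Hermiticity, unit trace, and positivity. The first two are immediate from \eqref{eq:HermTraceless}. Since $X^\dagger=X$ and $\rho_0^\dagger=\rho_0$, every real $\phi$ gives $\rho(\phi)^\dagger=\rho(\phi)$. Since $\Tr X=0$, we also get $\Tr\rho(\phi)=\Tr\rho_0=1$ for every $\phi$.

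The only real step is positivity, and it is where $\varepsilon$ gets fixed. Because $\rho_0$ is a multiple of the identity, it commutes with $X$. So if $\lambda_1,\dots,\lambda_d$ are the eigenvalues of $X$, the eigenvalues of $\rho(\phi)$ are exactly $1/d+\phi\lambda_i$. Let $\|X\|_\infty=\max_i|\lambda_i|$. If $X=0$, any $\varepsilon>0$ works trivially. Otherwise set
\begin{equation}
\varepsilon = \frac{1}{d\,\|X\|_\infty}.
\end{equation}
Then for $|\phi|<\varepsilon$ each eigenvalue satisfies $1/d+\phi\lambda_i\ge 1/d-|\phi|\,\|X\|_\infty>0$. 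Hence $\rho(\phi)$ is positive, and in fact full rank, on the whole interval. (For a general full-rank $\rho_0$ the same argument would go through Weyl's inequality, $\lambda_{\min}(\rho_0+\phi X)\ge\lambda_{\min}(\rho_0)-|\phi|\,\|X\|_\infty$, with $\varepsilon=\lambda_{\min}(\rho_0)/\|X\|_\infty$. The maximally mixed choice avoids needing this.)

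Finally, $\rho(\phi)$ is affine in $\phi$, so it is $C^\infty$, and $\partial_\phi\rho(\phi)=X$ for every $\phi$, in particular at $\phi=0$. None of these steps is a genuine obstacle. The one point requiring care is that positivity is only guaranteed on a neighbourhood of $\phi=0$, and the explicit $\varepsilon$ above gives that neighbourhood. This is exactly why the lemma is local: any traceless Hermitian $X$ with both signs of eigenvalues must eventually make $\rho_0+\phi X$ non-positive for large $|\phi|$.
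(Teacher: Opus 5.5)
Your proof is correct and is essentially the paper's argument. The paper takes an arbitrary full-rank $\rho_0$ with smallest eigenvalue $\mu_{\min}$ and uses $\rho_0+\phi X\ge(\mu_{\min}-|\phi|\,\|X\|_\infty)I$ to get $\varepsilon=\mu_{\min}/\|X\|_\infty$, which is exactly your parenthetical Weyl-type remark. Your choice $\rho_0=I/d$ is the special case $\mu_{\min}=1/d$, where commutation with $X$ makes the eigenvalues explicit.
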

	
	\begin{proof}
		If $X=0$ the statement is trivial; we assume $X\neq0$ in the following. Choose any full-rank density operator $\rho_0>0$. Let $\mu_{\min}>0$ be its smallest eigenvalue. Because $X$ is Hermitian, its operator norm $\norm{X}_\infty$ is finite. (Recall that for Hermitian operators $\norm{X}_\infty = \max_i |\lambda_i(X)|$, and we have $-\norm{X}_\infty I \le X \le \norm{X}_\infty I$, which implies $-|\phi|\,\norm{X}_\infty I \le \phi X \le |\phi|\,\norm{X}_\infty I$ for all real $\phi$.) For $|\phi|<\mu_{\min}/\norm{X}_\infty$, we have
		\begin{equation}
			\rho_0+\phi X \ge \rho_0 - |\phi|\,\norm{X}_\infty I \ge \mu_{\min}I - |\phi|\,\norm{X}_\infty I \ge 0.
		\end{equation}
		Moreover, $\Tr(\rho_0+\phi X) = \Tr\rho_0 + \phi\,\Tr X = 1$, since $\Tr X=0$. Hermiticity is obvious. Hence $\rho(\phi)$ is a density operator for sufficiently small $|\phi|$, and its derivative at $\phi=0$ is $X$.
	\end{proof}
	
	\section{Universal bounds}
	\label{sec:bounds}
	
	We now derive the sharp general bounds on $\mathcal R$, refined by the parity of the rank.
	
	\begin{theorem}
		\label{thm:universal}
		Let $X\neq0$ be a traceless Hermitian operator of rank $r$. Then
		\[
		\sqrt{2}\le \frac{\|X\|_1}{\|X\|_2}\le
		\begin{cases}
			\sqrt r, & r\ \text{even},\\[8pt]
			\sqrt{\,r-\dfrac{1}{r}\,}, & r\ \text{odd}.
		\end{cases}
		\]
		Both the lower bound and the upper bounds are sharp.
		\begin{itemize}
			\item The lower bound holds with equality if and only if $X$ has exactly two nonzero eigenvalues $\{\lambda,-\lambda\}$ (rank two).
			\item For even $r$, the upper bound is attained if and only if all nonzero singular values are equal; equivalently, the $r$ nonzero eigenvalues consist of $r/2$ copies of $+s$ and $r/2$ copies of $-s$.
			\item For odd $r$, the upper bound is attained if and only if the positive eigenvalues are all equal, the negative eigenvalues are all equal, and their multiplicities are $(r-1)/2$ and $(r+1)/2$, up to exchange.
		\end{itemize}
	\end{theorem}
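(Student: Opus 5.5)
We reduce everything to the eigenvalues, as in Eq.~\eqref{eq:Rlambda}. Write the nonzero eigenvalues of $X$ as $a_1,\dots,a_p>0$ and $-b_1,\dots,-b_q<0$ with $p+q=r$. Tracelessness gives $A:=\sum_i a_i=\sum_j b_j=:B$, so $\|X\|_1=2A$ and $\|X\|_2^2=\sum_i a_i^2+\sum_j b_j^2$. Since $X\neq0$ and $\Tr X=0$, both $p\ge1$ and $q\ge1$, hence $r\ge2$.

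\emph{Lower bound.} Because all $a_i>0$, we have $\sum_i a_i^2\le(\sum_i a_i)^2=A^2$, with equality iff $p=1$. Likewise $\sum_j b_j^2\le A^2$, with equality iff $q=1$. Therefore $\|X\|_2^2\le 2A^2$, i.e.\ $\|X\|_1^2/\|X\|_2^2\ge 4A^2/2A^2=2$. Equality forces $p=q=1$, so the spectrum is $\{\lambda,-\lambda\}$ by the trace condition. Conversely, such a spectrum gives $\mathcal R=\sqrt2$.

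\emph{Upper bound.} By Cauchy--Schwarz within each sign block, $\sum_i a_i^2\ge A^2/p$ and $\sum_j b_j^2\ge A^2/q$, with equality iff the $a_i$ are all equal and the $b_j$ are all equal. Hence
\[
\frac{\|X\|_1^2}{\|X\|_2^2}\le \frac{4A^2}{A^2(1/p+1/q)}=\frac{4pq}{p+q}=\frac{4pq}{r}.
\]
It then remains to maximize $pq$ over integers with $p+q=r$ and $p,q\ge1$. For even $r$ the maximum is $r^2/4$, attained only at $p=q=r/2$, which gives $\sqrt r$. For odd $r$ it is $(r^2-1)/4$, attained only at $\{p,q\}=\{(r-1)/2,(r+1)/2\}$, which gives $r-1/r$. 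The equality conditions follow by combining the two steps: constant positive eigenvalues, constant negative eigenvalues, and the optimal multiplicities. In the even case the trace condition $pa=qb$ with $p=q$ forces $a=b$, so all singular values are equal. Conversely, if all $r$ singular values are equal and $X$ is traceless, then $p=q$, so this characterization is equivalent. In the odd case the common values are fixed by $pa=qb$.

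\emph{Sharpness.} We exhibit diagonal traceless matrices with the extremal spectra. By Lemma~\ref{lem:realization}, these are legitimate tangents, so each bound is attained. There is no real obstacle in this argument. The only step needing care is the equality analysis: we must track both inequalities simultaneously and check that the integer maximization of $pq$ has a unique optimizer up to exchange of $p$ and $q$.
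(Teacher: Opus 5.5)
Your proof is correct and follows the paper's argument essentially step for step. Both split the spectrum into sign blocks with equal sums, use $\sum a_i^2\le A^2$ for the lower bound, use Cauchy--Schwarz to reach $4pq/(p+q)$, and then maximize over integer splits at fixed $r$. Your explicit use of $pa=qb$ to force $a=b$ in the even case, together with the converse that equal singular values and tracelessness force $p=q$, makes the paper's stated equivalence of the two even-case characterizations slightly more transparent.
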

	
	\begin{proof}
		Since $X$ is Hermitian and traceless, its nonzero eigenvalues can be written as
		\begin{equation}
			\alpha_1,\dots,\alpha_m,\,-\beta_1,\dots,-\beta_n,
			\label{eq:specsplit}
		\end{equation}
		where $\alpha_k>0$, $\beta_j>0$, $m\ge 1$, $n\ge 1$, and $m+n=r$. Tracelessness implies
		\begin{equation}
			\sum_{k=1}^{m}\alpha_k = \sum_{j=1}^{n}\beta_j =: Q > 0.
			\label{eq:Q}
		\end{equation}
		Consequently,
		\begin{equation}
			\norm{X}_1 = 2Q, \qquad
			\norm{X}_2^2 = \sum_{k=1}^{m}\alpha_k^2 + \sum_{j=1}^{n}\beta_j^2.
			\label{eq:normssplit}
		\end{equation}
		Thus
		\begin{equation}
			\mathcal R(X)^2 = \frac{4Q^2}{\sum_{k=1}^{m}\alpha_k^2 + \sum_{j=1}^{n}\beta_j^2}.
			\label{eq:R2split}
		\end{equation}
		
		We first prove the lower bound. By the elementary estimate
		\begin{equation}
			\sum_{k=1}^{m}\alpha_k^2 \le \biggl(\sum_{k=1}^{m}\alpha_k\biggr)^2 = Q^2,
		\end{equation}
		with equality if and only if $m=1$. Similarly,
		\begin{equation}
			\sum_{j=1}^{n}\beta_j^2 \le Q^2,
		\end{equation}
		with equality if and only if $n=1$. Substituting into Eq.~\eqref{eq:R2split} yields
		\begin{equation}
			\mathcal R(X)^2 \ge \frac{4Q^2}{2Q^2} = 2,
		\end{equation}
		hence $\mathcal R(X)\ge\sqrt{2}$. Equality holds if and only if $m=n=1$, i.e., $X$ has exactly one positive and one negative eigenvalue, so $\rank(X)=2$. Conversely, any traceless Hermitian rank-two operator has spectrum $\{\lambda,-\lambda\}$ with $\lambda\neq0$, and therefore $\mathcal R(X)=\sqrt2$. This proves the lower bound and its equality condition.
		
		For the upper bound we minimize the denominator for fixed $Q$, $m$, $n$. By Cauchy--Schwarz,
		\begin{equation}
			\sum_{k=1}^{m}\alpha_k^2 \ge \frac{Q^2}{m},\qquad
			\sum_{j=1}^{n}\beta_j^2 \ge \frac{Q^2}{n},
		\end{equation}
		with equalities if and only if all $\alpha_k$ are equal and all $\beta_j$ are equal. Hence
		\begin{equation}
			\mathcal R(X)^2 \le \frac{4Q^2}{Q^2(1/m+1/n)} = \frac{4mn}{m+n}.
			\label{eq:mn_bound}
		\end{equation}
		
		For fixed rank $r=m+n$, the function $f(m)=4m(r-m)/r$ is maximized when $m$ is as close as possible to $r/2$. A direct calculation gives
		\begin{equation}
			\max_{\substack{m,n\ge1\\m+n=r}} \frac{4mn}{m+n}
			= \begin{cases}
				r, & r\ \text{even},\\[6pt]
				r-\dfrac{1}{r}, & r\ \text{odd}.
			\end{cases}
			\label{eq:fixedrmax}
		\end{equation}
		The maximum for even $r$ is achieved by $m=n=r/2$; for odd $r$ it is achieved by $\{m,n\}=\{(r-1)/2,\,(r+1)/2\}$. Substituting into Eq.~\eqref{eq:mn_bound} gives the stated piecewise upper bound. The equality conditions follow from the chain of inequalities: for even $r$, all $\alpha_k$ and all $\beta_j$ must be equal, which together with $m=n$ yields equal multiplicities of $\pm s$; for odd $r$, the same constancy holds but the multiplicities differ by one. Conversely, any operator with such a spectrum saturates the bound. This completes the proof.
	\end{proof}
	
	\begin{remark}
		The generic bound $\|X\|_1/\|X\|_2\le\sqrt r$ (valid for any operator) is still true for odd $r$, but it is not sharp because tracelessness prevents equal singular values. The refined bound $\sqrt{r-1/r}$ is the sharp restriction. The lower bound is strictly stronger than the trivial $1$ and is a direct consequence of tracelessness.
	\end{remark}
	
	\section{Pure-state families}
	\label{sec:pure}
	
	Pure-state families provide a rigid and physically important special case in which the universal lower bound is always saturated.
	
	\begin{proposition}
		\label{prop:pure}
		Let $\ket{\psi(\phi)}$ be a differentiable normalized state vector and $\rho(\phi)=\ket{\psi(\phi)}\bra{\psi(\phi)}$. If $\partial_\phi\rho(\phi)\neq0$, then $\partial_\phi\rho(\phi)$ has rank $2$ and
		\begin{equation}
			\mathcal R(\phi)=\sqrt2.
		\end{equation}
	\end{proposition}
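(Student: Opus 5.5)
The plan is to compute the tangent explicitly and show that its range lies inside a two-dimensional subspace; Theorem~\ref{thm:universal} then finishes the argument. Writing $\ket{\psi}=\ket{\psi(\phi)}$ and $\ket{\dot\psi}=\partial_\phi\ket{\psi(\phi)}$, the product rule gives
\begin{equation}
X=\partial_\phi\rho=\ket{\dot\psi}\bra{\psi}+\ket{\psi}\bra{\dot\psi}.
\end{equation}
Every vector in the range of $X$ is a linear combination of $\ket{\psi}$ and $\ket{\dot\psi}$, so $\rank X\le 2$.

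Next I will rule out $\rank X\in\{0,1\}$. Rank zero is excluded by the hypothesis $X\neq0$. For rank one, recall from Eq.~\eqref{eq:HermTraceless} that $X$ is Hermitian and traceless. A rank-one Hermitian operator has exactly one nonzero eigenvalue, and its trace equals that eigenvalue, so it cannot be traceless. Hence $\rank X=2$. Tracelessness then forces the spectrum to be $\{\lambda,-\lambda\}$ with $\lambda\neq0$, plus zeros. By the equality case of the lower bound in Theorem~\ref{thm:universal}, or directly $\mathcal R^2=(2|\lambda|)^2/(2\lambda^2)=2$, we get $\mathcal R(\phi)=\sqrt2$.

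As a cross-check, and to make the structure concrete, I would decompose $\ket{\dot\psi}=a\ket{\psi}+\ket{\psi_\perp}$ with $\langle\psi|\psi_\perp\rangle=0$. Normalization gives $\mathrm{Re}\,a=0$, so the $\ket{\psi}\bra{\psi}$ terms cancel and
\begin{equation}
X=\ket{\psi_\perp}\bra{\psi}+\ket{\psi}\bra{\psi_\perp}.
\end{equation}
This operator has eigenvalues $\pm\|\psi_\perp\|$, and it is nonzero exactly when $\ket{\psi_\perp}\neq0$.

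There is no serious obstacle here. The only point needing care is the rank-one exclusion, and it follows immediately from tracelessness. The phase freedom in $\ket{\dot\psi}$ is harmless because the imaginary part of $a$ cancels in $X$.
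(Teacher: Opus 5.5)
Your argument is correct, and its main line differs from the paper's.

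The paper works by explicit computation. It splits $|\dot\psi\rangle=\alpha|\psi\rangle+|\eta\rangle$ with $\langle\psi|\eta\rangle=0$ and uses normalization ($\alpha+\alpha^*=0$) to cancel the $|\psi\rangle\langle\psi|$ term. It then writes the tangent in the orthonormal basis $\{|\psi\rangle,|\eta\rangle/\|\eta\|\}$ as $\|\eta\|\,\sigma_x$ and reads off rank two and the singular values $\|\eta\|,\|\eta\|$ directly. That is exactly your ``cross-check.''

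Your primary argument is structural instead. The range of $X$ lies in $\mathrm{span}\{|\psi\rangle,|\dot\psi\rangle\}$, so the rank is at most two. A Hermitian rank-one operator has trace equal to its single nonzero eigenvalue, so tracelessness excludes rank one. Rank two plus tracelessness then forces the spectrum $\{\lambda,-\lambda\}$, and the equality case of Theorem~\ref{thm:universal} (or the one-line computation) gives $\mathcal R=\sqrt2$.

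Your route is coordinate-free. Normalization enters only through the general fact $\mathrm{Tr}\,X=0$, not through an explicit orthogonal decomposition. The argument also applies verbatim to any nonzero traceless Hermitian operator whose range lies in a two-dimensional subspace.

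The paper's computation, by contrast, is self-contained and does not lean on Theorem~\ref{thm:universal}. It also yields the scale $\lambda=\|\eta\|=\bigl(\langle\dot\psi|\dot\psi\rangle-|\langle\psi|\dot\psi\rangle|^2\bigr)^{1/2}$ and the eigenvectors, which is precisely what Corollary~\ref{cor:unitary} needs to identify $\mathrm{TS}=\Delta G$. Your rank argument alone fixes the ratio but not this magnitude, so it is your cross-check that recovers it.
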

	
	\begin{proof}
		Write $\ket{\dot\psi}:=\partial_\phi\ket{\psi}$. Differentiating the projector gives
		\begin{equation}
			\partial_\phi\rho = \ket{\dot\psi}\bra{\psi} + \ket{\psi}\bra{\dot\psi}.
			\label{eq:drho_pure}
		\end{equation}
		Define $\alpha:=\braket{\psi|\dot\psi}$. From $\braket{\psi|\psi}=1$ we obtain $\alpha+\alpha^*=0$, so $\alpha$ is purely imaginary. Decompose
		\begin{equation}
			\ket{\dot\psi} = \alpha\ket{\psi} + \ket{\eta},
			\qquad \braket{\psi|\eta}=0.
			\label{eq:decomp}
		\end{equation}
		Substituting Eq.~\eqref{eq:decomp} into Eq.~\eqref{eq:drho_pure} gives
		\begin{equation}
			\partial_\phi\rho = (\alpha+\alpha^*)\ket{\psi}\bra{\psi} + \ket{\eta}\bra{\psi} + \ket{\psi}\bra{\eta}
			= \ket{\eta}\bra{\psi} + \ket{\psi}\bra{\eta},
			\label{eq:pure_eta}
		\end{equation}
		since $\alpha+\alpha^*=0$. If $\ket{\eta}=0$, then $\partial_\phi\rho=0$, contradicting nonstationarity. Hence $\beta:=\norm{\eta}>0$, and we define $\ket{\psi_\perp}:=\ket{\eta}/\beta$. Then
		\begin{equation}
			\partial_\phi\rho = \beta\bigl(\ket{\psi_\perp}\bra{\psi} + \ket{\psi}\bra{\psi_\perp}\bigr).
		\end{equation}
		In the orthonormal basis $\{\ket{\psi},\ket{\psi_\perp}\}$ the matrix is $\begin{pmatrix}0&\beta\\\beta&0\end{pmatrix}$, with eigenvalues $\pm\beta$. Thus $\partial_\phi\rho$ acts nontrivially only on the two-dimensional subspace spanned by $\ket{\psi}$ and $\ket{\psi_\perp}$; its rank is $2$ and its nonzero singular values are both $\beta$. Therefore $\norm{\partial_\phi\rho}_1=2\beta$, $\norm{\partial_\phi\rho}_2=\sqrt2\,\beta$, and $\mathcal R=\sqrt2$.
	\end{proof}
	
	The unitary specialization is immediate.
	
	\begin{corollary}
		\label{cor:unitary}
		If $\partial_\phi\ket{\psi(\phi)} = -iG\ket{\psi(\phi)}$ for a Hermitian generator $G$, then at every nonstationary point
		\begin{equation}
			\mathrm{TS}(\phi)=\Delta G,\qquad
			\mathrm{HSS}(\phi)=\frac{\Delta G}{\sqrt2},\qquad
			\mathcal R(\phi)=\sqrt2,
		\end{equation}
		where $\Delta G = \sqrt{\expval{G^2}-\expval{G}^2}$.
	\end{corollary}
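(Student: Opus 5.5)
We will derive this directly from the explicit form of the pure-state tangent obtained in the proof of Proposition~\ref{prop:pure}. First, the hypothesis $\partial_\phi\ket{\psi}=-iG\ket{\psi}$ lets us identify $\ket{\dot\psi}=-iG\ket{\psi}$. Hence $\alpha=\braket{\psi|\dot\psi}=-i\expval{G}$, which is purely imaginary as required. The orthogonal component is then
\[
\ket{\eta}=\ket{\dot\psi}-\alpha\ket{\psi}=-i\bigl(G-\expval{G}\bigr)\ket{\psi}.
\]

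The key step is to compute $\beta=\norm{\eta}$:
\[
\beta^2=\bra{\psi}\bigl(G-\expval{G}\bigr)^2\ket{\psi}=\expval{G^2}-\expval{G}^2=(\Delta G)^2,
\]
so $\beta=\Delta G$. Nonstationarity means $\partial_\phi\rho\neq0$, which by the proof of Proposition~\ref{prop:pure} is equivalent to $\ket{\eta}\neq0$, i.e.\ $\Delta G>0$.

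By Proposition~\ref{prop:pure}, $\partial_\phi\rho$ has eigenvalues $\pm\beta$ and no other nonzero eigenvalues. Therefore
\[
\norm{\partial_\phi\rho}_1=2\Delta G,\qquad \norm{\partial_\phi\rho}_2=\sqrt2\,\Delta G.
\]
Inserting these into the definitions $\mathrm{TS}=\tfrac12\norm{\cdot}_1$ and $\mathrm{HSS}=\tfrac12\norm{\cdot}_2$ gives $\mathrm{TS}=\Delta G$ and $\mathrm{HSS}=\Delta G/\sqrt2$, and taking the ratio gives $\mathcal R=\sqrt2$. The ratio statement also follows directly from Proposition~\ref{prop:pure}.

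None of these steps is a real obstacle. The only points needing care are the sign and phase of $\alpha$ and the equivalence between nonstationarity and $\Delta G\neq0$, and both are handled by the decomposition already established in the proof of Proposition~\ref{prop:pure}.
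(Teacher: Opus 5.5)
Your proposal is correct and matches the paper's proof: identify $|\eta\rangle=-i(G-\langle G\rangle)|\psi\rangle$, compute $\|\eta\|=\Delta G$, and read off the two norms from Proposition~\ref{prop:pure}. Your explicit computation of $\alpha=-i\langle G\rangle$ and your remark that nonstationarity is equivalent to $\Delta G>0$ fill in details the paper leaves implicit.
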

	
	\begin{proof}
		The orthogonal component is $|\eta\rangle = -i(G-\expval{G})|\psi\rangle$, whose norm is $\|\eta\| = \sqrt{\langle\psi|(G-\langle G\rangle)^2|\psi\rangle} = \Delta G$ by definition of the variance. The statement follows from Proposition~\ref{prop:pure}. Thus, for unitary pure-state dynamics generated by $G$, the trace speed and Hilbert-Schmidt speed are both controlled by the generator variance, and their ratio remains fixed at $\sqrt2$.
	\end{proof}
	
	\section{Effective rank and the qubit case}
	\label{sec:effective}
	
	The ratio $\mathcal R$ admits a direct information-theoretic interpretation in terms of the singular-value distribution of the tangent operator. Let $\sigma_1,\dots,\sigma_r>0$ be the nonzero singular values of $X=\partial_\phi\rho$. Then
	\begin{equation}
		\mathcal R^2 = \frac{\bigl(\sum_{i=1}^r \sigma_i\bigr)^2}{\sum_{i=1}^r \sigma_i^2}.
	\end{equation}
	If $p_i = \sigma_i/\sum_j\sigma_j$, then $\mathcal R^2=1/\sum_i p_i^2$, which is the participation ratio (the inverse of the standard inverse participation ratio) of the normalized singular-value distribution. This prompts the definition
	\begin{equation}
		r_{\mathrm{eff}} := \frac{\bigl(\sum_{i=1}^r \sigma_i\bigr)^2}{\sum_{i=1}^r \sigma_i^2} = \mathcal R^2,
		\label{eq:reff}
	\end{equation}
	which we term the \textit{Schatten effective rank} of the tangent operator.
	
	Theorem~\ref{thm:universal} implies the bounds
	\begin{equation}
		2 \le r_{\mathrm{eff}} \le
		\begin{cases}
			r, & r\ \text{even},\\[4pt]
			r-\dfrac{1}{r}, & r\ \text{odd},
		\end{cases}
		\label{eq:reffbounds}
	\end{equation}
	and in all cases $r_{\mathrm{eff}}\le r\le d$. The lower endpoint corresponds precisely to rank-two tangents, while the upper endpoint for even $r$ occurs exactly when all nonzero singular values are equal; for odd $r$ the maximum is slightly below $r$, realized by the extremal multiplicities $(r\pm1)/2$ with constant values on each support.
	
	Thus $\mathcal R=\sqrt2$ certifies only that the local tangent has effective rank two, regardless of whether the state is pure or mixed. Similarly, $\mathcal R>\sqrt2$ signals participation of more than two singular modes but does not identify the underlying physical mechanism.
	
	For qubit systems the ratio is completely rigid, independent of the underlying state family.
	
	\begin{proposition}
		\label{prop:qubit}
		For every differentiable family of qubit states, at every nonstationary point one has $\mathcal R=\sqrt2$.
	\end{proposition}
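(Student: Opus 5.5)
The plan is to reduce the statement to the equality case of Theorem~\ref{thm:universal}. For a qubit, $\mathcal H\simeq\mathbb C^2$, so the tangent $X=\partial_\phi\rho(\phi)$ is a $2\times2$ Hermitian traceless matrix. At a nonstationary point $X\neq0$, and we will show that $X$ has rank exactly two with spectrum $\{\lambda,-\lambda\}$, $\lambda\neq0$. The lower-bound equality clause of Theorem~\ref{thm:universal} (or a direct computation) then gives $\mathcal R=\sqrt2$.

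The steps, in order, are as follows. First, since $X$ is Hermitian with $\Tr X=0$, its two eigenvalues $\lambda_1,\lambda_2$ are real and satisfy $\lambda_1+\lambda_2=0$, so they are $\pm\lambda$. Second, $X\neq0$ forces $\lambda\neq0$, because a Hermitian matrix with all eigenvalues zero vanishes; hence $\rank X=2$. Third, we evaluate the ratio directly from Eq.~\eqref{eq:Rlambda}:
\[
\mathcal R^2=\frac{(2|\lambda|)^2}{2\lambda^2}=2 .
\]
Equivalently, we can invoke the rank-two equality condition of Theorem~\ref{thm:universal}.

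As an optional illustration, we could also write $\rho=\tfrac12(I+\vec r\cdot\vec\sigma)$, so that $X=\tfrac12\,\partial_\phi\vec r\cdot\vec\sigma$. This has eigenvalues $\pm\tfrac12|\partial_\phi\vec r|$, which gives $\mathrm{TS}=\tfrac12|\partial_\phi\vec r|$ and $\mathrm{HSS}=|\partial_\phi\vec r|/(2\sqrt2)$. This form shows explicitly that the result holds for mixed as well as pure qubit families.

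There is no real obstacle here. The only point needing care is to argue that rank one is impossible: a nonzero traceless Hermitian operator cannot have a single nonzero eigenvalue, since that eigenvalue would have to equal the trace, which is zero. In dimension two, this rules out every possibility except rank two.
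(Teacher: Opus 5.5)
Your proof is correct and follows essentially the same route as the paper: tracelessness of the $2\times2$ Hermitian tangent forces the spectrum $\{\lambda,-\lambda\}$ with $\lambda\neq0$, and direct evaluation of Eq.~\eqref{eq:Rlambda} gives $\mathcal R^2=2$. Your Bloch-vector illustration is a correct optional addition (it yields $\mathrm{TS}=\tfrac12|\partial_\phi\vec r|$ and $\mathrm{HSS}=|\partial_\phi\vec r|/(2\sqrt2)$), but it is not needed for the argument.
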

	
	\begin{proof}
		Any nonzero traceless Hermitian $2\times2$ matrix has eigenvalues $\pm\lambda$ for some $\lambda>0$. Its singular values are both $\lambda$, giving $\mathcal R = 2\lambda/\sqrt{2\lambda^2} = \sqrt2$. This argument uses only the dimension and the traceless Hermitian structure, so the rigidity holds irrespective of whether the dynamics are unitary, dissipative, or non-Markovian. Hence every nonstationary differentiable qubit evolution satisfies $\mathcal R=\sqrt2$.
	\end{proof}
	
	\section{Physical consequences: decomposition, metrology, and speed limits}
	\label{sec:consequences}
	
	The effective rank $r_{\mathrm{eff}} = \mathcal R^2$ and the associated tangent speeds carry important physical information about the local dynamics. In this section, we derive three direct consequences of the framework established above: a decomposition into classical and quantum contributions, a tight relationship with the QFI, and a family of quantum speed limits governed by the effective rank.
	
	\subsection{Classical and quantum contributions to the effective rank}
	\label{sec:decomp}
	
	Let $\rho(\phi)$ be a differentiable family of density operators. On any open interval where the spectral multiplicities remain constant, the distinct eigenvalues $\lambda_a(\phi)$ and the corresponding spectral projectors $P_a(\phi)$ are $C^1$. We can then define the canonical decomposition
	\begin{equation}
		D(\phi) := \sum_a \bigl(\partial_\phi \lambda_a(\phi)\bigr) P_a(\phi), \qquad
		O(\phi) := \partial_\phi\rho(\phi) - D(\phi).
		\label{eq:intrinsic_DO}
	\end{equation}
	In an eigenbasis adapted to the spectral subspaces, $D$ is diagonal and $O$ is off-diagonal with respect to the spectral-block decomposition; it has no matrix elements within a given degenerate eigenspace block. They are orthogonal in the Hilbert-Schmidt inner product,
	\[
	\Tr(D^\dagger O)=0,
	\]
	so that $\|X\|_2^2 = \|D\|_2^2 + \|O\|_2^2$. At isolated points where eigenvalues cross, the decomposition can be performed relative to any diagonalizing basis; the inequality proved below remains valid for any such choice, although the attribution of $D$ to eigenvalue changes and $O$ to eigenvector rotations is then basis-dependent.
	
	For convenience, one may work in any instantaneous eigenbasis and set
	\[
	D_{ij} = \delta_{ij}\,\partial_\phi p_i, \qquad
	O_{ij} = (1-\delta_{ij}) X_{ij},
	\]
	which reproduces the projector form on intervals of constant multiplicity. Both $D$ and $O$ are Hermitian and traceless.
	
	We define the \textit{classical effective rank} as
	\[
	r_C := \begin{cases}
		\displaystyle\frac{\norm{D}_1^2}{\norm{D}_2^2}, & D\neq0,\\[10pt]
		0, & D=0,
	\end{cases}
	\]
	and similarly the \textit{quantum effective rank} $r_Q$ for $O$. The total effective rank is $r_{\mathrm{eff}} = \mathcal R^2 = \norm{X}_1^2/\norm{X}_2^2$.
	
	\begin{proposition}
		\label{prop:decomp}
		For any differentiable family of states, using any instantaneous eigenbasis to define $D$ and $O$,
		\[
		r_{\mathrm{eff}} \le r_C + r_Q.
		\]
		Equality is guaranteed when $D=0$ or $O=0$; a full characterization of equality conditions is not pursued here.
	\end{proposition}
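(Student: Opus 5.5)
The plan is to bound the trace norm of $X=D+O$ from above by splitting it along the decomposition, and the Hilbert--Schmidt norm from below using the orthogonality $\Tr(D^\dagger O)=0$ established above. This orthogonality is the only structural input needed. It holds in any instantaneous eigenbasis, because $D$ is diagonal and $O$ has vanishing diagonal there, so that $\Tr(DO)=\sum_i D_{ii}O_{ii}=0$. Hence
\begin{equation}
\|X\|_2^2=\|D\|_2^2+\|O\|_2^2 .
\end{equation}
Note that $X\neq0$ at a nonstationary point, so $D$ and $O$ cannot both vanish.

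First I will dispose of the degenerate cases. If $D=0$, then $X=O$, so $r_{\mathrm{eff}}=r_Q=r_C+r_Q$. Symmetrically, if $O=0$ then $r_{\mathrm{eff}}=r_C=r_C+r_Q$. These are exactly the stated equality cases.

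Now assume $D\neq0$ and $O\neq0$. By the definitions of $r_C$ and $r_Q$ we may write
\begin{equation}
\|D\|_1=\sqrt{r_C}\,\|D\|_2,\qquad \|O\|_1=\sqrt{r_Q}\,\|O\|_2 .
\end{equation}
The triangle inequality for the trace norm then gives
\begin{equation}
\|X\|_1\le \|D\|_1+\|O\|_1=\sqrt{r_C}\,\|D\|_2+\sqrt{r_Q}\,\|O\|_2 .
\end{equation}
Next, apply the Cauchy--Schwarz inequality in $\mathbb R^2$ to the vectors $(\sqrt{r_C},\sqrt{r_Q})$ and $(\|D\|_2,\|O\|_2)$:
\begin{equation}
\sqrt{r_C}\,\|D\|_2+\sqrt{r_Q}\,\|O\|_2\le\sqrt{r_C+r_Q}\,\sqrt{\|D\|_2^2+\|O\|_2^2}=\sqrt{r_C+r_Q}\,\|X\|_2 .
\end{equation}
Squaring and dividing by $\|X\|_2^2>0$ yields $r_{\mathrm{eff}}\le r_C+r_Q$.

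The only step that requires care is the Pythagorean identity, because the Cauchy--Schwarz step only closes against $\|X\|_2$ through it. I therefore expect no real obstacle beyond checking that orthogonality holds for an arbitrary diagonalizing basis, including at eigenvalue crossings. The argument uses only diagonality of $D$ and zero diagonal of $O$ in that basis, so this check is immediate. No characterization of equality beyond the two trivial cases is needed, in line with the statement.
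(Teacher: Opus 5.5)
Your proof is correct and follows the paper's argument. Both combine the trace-norm triangle inequality with Hilbert--Schmidt orthogonality, and your Cauchy--Schwarz step in $\mathbb R^2$ is exactly the paper's elementary inequality $(a+b)^2/(c+d)\le a^2/c+b^2/d$, which the paper obtains from $(ad-bc)^2\ge0$; handling $D=0$ or $O=0$ as separate cases is slightly cleaner than the paper's appeal to a limit when a denominator vanishes.
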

	
	\begin{proof}
		By the triangle inequality for the trace norm, $\norm{X}_1 = \norm{D+O}_1 \le \norm{D}_1 + \norm{O}_1$. Combining this with the Hilbert-Schmidt orthogonality gives
		\[
		r_{\mathrm{eff}} \le \frac{(\norm{D}_1+\norm{O}_1)^2}{\norm{D}_2^2+\norm{O}_2^2}.
		\]
		For any non-negative numbers $a,b,c,d$ with $c,d>0$ (and taking the obvious limit if one denominator is zero), $\frac{(a+b)^2}{c+d} \le \frac{a^2}{c} + \frac{b^2}{d}$ because $(ad-bc)^2\ge0$. Applying this with $a=\|D\|_1$, $c=\|D\|_2^2$, etc., yields
		\[
		r_{\mathrm{eff}} \le \frac{\|D\|_1^2}{\|D\|_2^2} + \frac{\|O\|_1^2}{\|O\|_2^2} = r_C + r_Q.
		\]
		If $D=0$, then $r_C = 0$ and the triangle inequality becomes $\|X\|_1 = \|O\|_1$, so the chain of inequalities reduces to equalities; hence $r_{\mathrm{eff}} = r_Q = r_C + r_Q$. The same argument holds if $O=0$. Therefore equality is guaranteed whenever one of the two components vanishes.
	\end{proof}
	
	\begin{remark}
		This decomposition sharply distinguishes unitary and classical dynamics. For unitary families $\rho(\phi)=e^{-i\phi G}\rho_0 e^{i\phi G}$, the eigenvalues $p_i$ are constant, so $D=0$ and $r_{\mathrm{eff}} = r_Q$: the speed ratio is purely quantum. For families that remain diagonal in a fixed basis (i.e., dynamics equivalent to a classical probability evolution in a fixed basis), $O=0$ and $r_{\mathrm{eff}} = r_C$. In general open-system dynamics, both contributions are present, and their sum upper-bounds the total effective rank, with equality guaranteed when one of them vanishes.
	\end{remark}
	
	\subsection{Relation to the Quantum Fisher Information}
	\label{sec:qfi}
	
	Consider a unitary family $\partial_\phi\rho = -i[G,\rho]$. In the eigenbasis of $\rho=\sum_i p_i\ket{i}\bra{i}$, the standard expression for the symmetric logarithmic derivative quantum Fisher information (QFI) \cite{BraunsteinCaves1994} reads
	\[
	F = 2\sum_{i,j:\,p_i+p_j>0} \frac{(p_i-p_j)^2}{p_i+p_j}|G_{ij}|^2,
	\]
	while the squared Hilbert-Schmidt speed takes the form
	\[
	\mathrm{HSS}^2 = \frac{1}{4}\sum_{i,j} (p_i-p_j)^2 |G_{ij}|^2.
	\]
	For every term with $p_i+p_j>0$ and $i\neq j$, we have $p_i+p_j\le 1$ (since the eigenvalues sum to one), hence $2/(p_i+p_j)\ge 2$. The diagonal terms with $i=j$ vanish because $(p_i-p_i)^2=0$. Therefore
	\[
	F \ge 8\,\mathrm{HSS}^2.
	\]
	Equality $F = 8\,\mathrm{HSS}^2$ holds iff for every pair $(i,j)$ such that $(p_i-p_j)^2|G_{ij}|^2 \neq 0$, one has $p_i+p_j=1$. This condition is satisfied by all non-stationary pure states, as well as by all unitary qubit families (where the only contributing off-diagonal pair has $p_1+p_2=1$). More generally, in higher dimensions equality can hold for certain rank-$2$ mixed states, provided the generator $G$ couples only the two support eigenvectors so that every contributing pair obeys $p_i+p_j=1$.
	
	Using $r_{\mathrm{eff}} = \mathrm{TS}^2/\mathrm{HSS}^2$, we obtain the equivalent bounds:
	\[
	F \ge \frac{8}{r_{\mathrm{eff}}}\,\mathrm{TS}^2 \qquad \Longleftrightarrow \qquad
	r_{\mathrm{eff}} \ge \frac{8\,\mathrm{TS}^2}{F}.
	\]
	This inequality provides a lower bound on the effective rank. Whenever $8\,\mathrm{TS}^2/F$ substantially exceeds the universal minimum $2$, the effective rank is forced to be correspondingly large, indicating multi-mode dynamics. Conversely, a large QFI does not by itself force a large effective rank. The effective rank thus serves as a diagnostic: a modest metrological performance (small $F$) at a given level of state distinguishability (fixed $\mathrm{TS}$) signals that many singular modes are active.
	
	\subsection{Quantum speed limits with effective rank}
	\label{sec:qsl}
	
	A standard geometric bound on the trace distance follows from the triangle inequality for integrals \cite{DeffnerCampbell2017}:
	\[
	\begin{aligned}
		D(\rho(0),\rho(\tau)) 
		&= \frac12 \bigl\| \rho(\tau) - \rho(0) \bigr\|_1 \\
		&\le \frac12 \int_0^\tau \bigl\| \partial_\phi \rho(\phi) \bigr\|_1 \, d\phi 
		= \int_0^\tau \mathrm{TS}(\phi) \, d\phi .
	\end{aligned}
	\]
	This inequality is generally strict; saturation is possible only for specially structured paths. Defining the time-averaged trace speed $\overline{\mathrm{TS}} := \frac{1}{\tau}\int_0^\tau \mathrm{TS}(\phi)\,d\phi$ directly yields the quantum speed limit
	\[
	\tau \ge \frac{D(\rho(0),\rho(\tau))}{\overline{\mathrm{TS}}}.
	\]
	Using the relation $\mathrm{TS} = \sqrt{r_{\mathrm{eff}}}\,\mathrm{HSS}$, we obtain
	\[
	\tau \ge \frac{D}{\overline{\sqrt{r_{\mathrm{eff}}}\,\mathrm{HSS}}},
	\]
	where the overline denotes the same time average. Because $\mathrm{HSS}(\phi)\ge0$ along the path, the time average satisfies $\overline{\sqrt{r_{\mathrm{eff}}}\,\mathrm{HSS}} \le \sqrt{r_{\max}}\,\overline{\mathrm{HSS}}$, with $r_{\max} := \max_{0\le\phi\le\tau} r_{\mathrm{eff}}(\phi)$ the maximal effective rank during the evolution. Applying the universal bound $r_{\mathrm{eff}}\le d$ then gives the hierarchy
	\[
	\tau \ge \frac{D}{\sqrt{r_{\max}}\;\overline{\mathrm{HSS}}} \ge \frac{D}{\sqrt{d}\;\overline{\mathrm{HSS}}}.
	\]
	If the Hilbert-space dimension $d$ is odd, Theorem~\ref{thm:odd} sharpens the dimension-only factor to $\sqrt{d-1/d}$, yielding the improved bound
	\[
	\tau \ge \frac{D}{\sqrt{d-1/d}\;\overline{\mathrm{HSS}}}.
	\]
	For pure states, $r_{\mathrm{eff}}\equiv 2$ is constant, reducing the bound to the simple form
	\[
	\tau \ge \frac{D}{\sqrt{2}\;\overline{\mathrm{HSS}}}.
	\]
	These inequalities show explicitly how the effective rank controls the tightness of an HSS-based quantum speed limit relative to the TS-based one. In experiments, $\mathrm{HSS}$ is often easier to access than $\mathrm{TS}$; the effective rank quantifies exactly the factor by which the simpler bound is looser.
	
	\section{Sharp maximum in odd dimension}
	\label{sec:odd}
	
	When the Hilbert-space dimension $d$ is odd, the tracelessness constraint prevents exact balancing of the multiplicities of positive and negative eigenvalues at full rank, leading to a sharper global bound.
	
	\begin{theorem}
		\label{thm:odd}
		Let $d$ be odd, and let $X\neq0$ be a traceless Hermitian operator on $\mathbb C^d$. Then
		\[
		\frac{\|X\|_1}{\|X\|_2}\le \sqrt{d-\frac1d}.
		\]
		This bound is sharp. Equality holds if and only if $X$ has full rank and its positive eigenvalues are all equal, its negative eigenvalues are all equal, and the two multiplicities are $(d-1)/2$ and $(d+1)/2$, up to exchange.
	\end{theorem}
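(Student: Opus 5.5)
The plan is to reduce the statement to Theorem~\ref{thm:universal} by comparing, over the admissible ranks $r\le d$, the rank-dependent ceilings it provides. Let $X\neq0$ be traceless Hermitian on $\mathbb C^d$ with rank $r$. Then $2\le r\le d$, and Theorem~\ref{thm:universal} gives $\mathcal R(X)^2\le g(r)$, where $g(r)=r$ for even $r$ and $g(r)=r-1/r$ for odd $r$. Equivalently, by Eq.~\eqref{eq:mn_bound}, $\mathcal R^2\le 4mn/(m+n)$, where $m,n\ge1$ are the numbers of positive and negative eigenvalues. So it suffices to show $\max_{2\le r\le d} g(r)=d-1/d$ for odd $d$, and to identify the maximizers.

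\textbf{Comparing ranks.} Both branches of $g$ are increasing in $r$ (note $r-1/r$ is increasing for $r\ge1$). Since $d$ is odd, every even $r\le d$ in fact satisfies $r\le d-1$. Among odd ranks, $g(r)\le g(d)=d-1/d$ with equality only at $r=d$. Among even ranks, $g(r)=r\le d-1<d-1/d$, using $1/d<1$. Hence $g(r)\le d-1/d$ for all admissible $r$, with equality iff $r=d$. This proves the bound.

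\textbf{Equality and sharpness.} If $\mathcal R(X)^2=d-1/d$, the comparison above forces $r=d$ (full rank). Theorem~\ref{thm:universal} in the odd-rank case $r=d$ then forces constant positive eigenvalues, constant negative eigenvalues, and multiplicities $\{(d-1)/2,(d+1)/2\}$. Conversely, an explicit example attains the bound. Take $m=(d-1)/2$ eigenvalues equal to $a=2/(d-1)$ and $n=(d+1)/2$ eigenvalues equal to $-b$, with $b=2/(d+1)$, so that $ma=nb=1$ and the trace vanishes. Then $\|X\|_1^2/\|X\|_2^2=4/(1/m+1/n)=4mn/d=(d^2-1)/d$. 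By Lemma~\ref{lem:realization}, this $X$ is realizable as a genuine tangent. The case $d=1$ is vacuous, since there $X=0$, so we may take $d\ge3$.

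\textbf{Main obstacle.} There is no real obstacle here. The only point needing care is the strict comparison between the even-rank value $d-1$ and $d-1/d$, which is exactly what excludes rank-deficient maximizers.
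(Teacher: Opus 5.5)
Your proof is correct and follows essentially the same route as the paper: both reduce to the fixed-rank ceiling $4mn/(m+n)\le g(r)$ from Theorem~\ref{thm:universal}, compare the odd full-rank value $d-1/d$ with the even-rank value of at most $d-1$, and read off the equality case and sharpness from the saturation conditions together with Lemma~\ref{lem:realization}. Your monotonicity remark on $g$ and the normalization $ma=nb=1$ in the extremal example spell out steps the paper leaves implicit, but the argument is the same.
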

	
	\begin{proof}
		As in the proof of Theorem~\ref{thm:universal}, write the nonzero eigenvalues of $X$ as $\alpha_1,\dots,\alpha_m$, $-\beta_1,\dots,-\beta_n$ with $\alpha_k,\beta_j>0$, $m,n\ge1$, and $m+n=r\le d$. Set $Q = \sum_k\alpha_k = \sum_j\beta_j$. For fixed $Q$, $m$, and $n$, the denominator $\sum_k\alpha_k^2+\sum_j\beta_j^2$ is minimized when the $\alpha_k$ are all equal and the $\beta_j$ are all equal, leading to
		\begin{equation}
			\mathcal R(X)^2 \le \frac{4mn}{m+n}.
			\label{eq:mn_bound2}
		\end{equation}
		
		We now maximize the right-hand side over integers $m,n\ge1$ with $m+n\le d$. For a fixed sum $r=m+n$, the analysis in Eq.~\eqref{eq:fixedrmax} gives
		\begin{equation}
			\max_{m+n=r}\frac{4mn}{m+n}
			= \begin{cases}
				r, & r\ \text{even},\\[4pt]
				r-\dfrac{1}{r}, & r\ \text{odd}.
			\end{cases}
		\end{equation}
		Since $d$ is odd, the largest admissible odd rank is $r=d$, giving $d-1/d$, while the largest even rank is $r=d-1$, giving $d-1$. Because $d-1/d > d-1$ for any $d>1$, the global maximum is $d-1/d$; any smaller odd or even rank yields a strictly smaller value. Therefore
		\begin{equation}
			\mathcal R(X)^2 \le d-\frac{1}{d},
		\end{equation}
		which proves the bound.
		
		To see that the bound is attainable, choose $m=(d-1)/2$, $n=(d+1)/2$, and take all positive eigenvalues equal to $Q/m$, all negative eigenvalues equal to $-Q/n$. This saturates each inequality, giving $\mathcal R^2 = d-1/d$. By Lemma~\ref{lem:realization}, such an $X$ is locally realizable as the tangent of a density-operator family.
		
		The equality conditions follow from the chain of inequalities: the bound is saturated only if the positive and negative eigenvalues are constant on their supports, the rank is full ($r=d$), and the multiplicities take the extremal values $\{(d-1)/2,(d+1)/2\}$.
		
		Thus the odd-dimensional obstruction is purely arithmetic: full-rank traceless spectra cannot split into equally large positive and negative sectors when $d$ is odd.
	\end{proof}
	
	\section{Discussion}
	\label{sec:discussion}
	
	The ratio $\mathcal R$ is appealing because it separates the geometry of the singular-value profile of the tangent operator from the overall magnitude of the motion. If the tangent is rescaled by a common multiplicative factor, both speeds scale identically and the ratio remains unchanged. This gives a limited form of calibration robustness, but more general reconstruction biases need not cancel, so the ratio should not be described as universally self-calibrating.
	
	The most robust interpretation of $\mathcal R$ is as a local diagnostic of singular-value concentration: $\mathcal R^2$ is the participation ratio of the tangent singular-value distribution, behaving like an effective rank. Values above $\sqrt2$ quantify the participation of more than two singular modes and thereby provide a compact, basis-independent measure of how broadly the tangent spectrum is distributed. This does not uniquely identify the physical mechanism-mixedness, leakage, or dissipation may all increase $\mathcal R$-but it offers a rigorous starting point for further dynamical analysis.
	
	The decomposition into classical and quantum effective ranks provides a sharp separation of eigenvalue changes from unitary rotations, with the caveat that at degeneracies the splitting is basis-dependent. Equality in the bound $r_{\mathrm{eff}}\le r_C+r_Q$ is guaranteed when one of the two contributions vanishes. The inequalities linking $\mathcal R$ to the quantum Fisher information (QFI) and to quantum speed limits show that the speed ratio carries direct metrological significance. Specifically, the bound $r_{\mathrm{eff}} \ge 8\,\mathrm{TS}^2/F$ quantifies a trade-off: a limited QFI at fixed distinguishability forces the effective rank to be large when the bound substantially exceeds the universal minimum of $2$. The speed-limit hierarchy $\tau \ge D/(\sqrt{r_{\max}}\,\overline{\mathrm{HSS}})$ makes the role of the effective rank in bounding evolution times explicit.
	
	It would be interesting to determine which parts of this structure survive in generalized frameworks where the tangent operator is no longer Hermitian, although in such settings the interpretation of singular values becomes model dependent.
	
	\section{Conclusion}
	
	In summary, the trace-to-Hilbert-Schmidt speed ratio for differentiable quantum states is governed by the Hermitian traceless structure of density-operator tangents. This yields the sharp rank-dependent bounds $\sqrt2\le\mathcal R\le\sqrt r$ (even $r$) and $\sqrt2\le\mathcal R\le\sqrt{r-1/r}$ (odd $r$), identifies pure-state and qubit dynamics as rigidly pinned to the lower endpoint, and gives the sharp odd-dimensional maximum $\mathcal R\le\sqrt{d-1/d}$. The squared ratio $\mathcal R^2$ is the participation ratio of the tangent singular-value distribution, providing a natural effective-rank characterization of local quantum dynamics. We have further decomposed the effective rank into classical and quantum contributions, obtained the bound $r_{\mathrm{eff}}\le r_C+r_Q$ with equality when one contribution vanishes, and derived the inequality $F\ge 8\,\mathrm{TS}^2/r_{\mathrm{eff}}$ linking the effective rank to the quantum Fisher information. The resulting lower bound $r_{\mathrm{eff}}\ge 8\,\mathrm{TS}^2/F$ shows that a small QFI forces a correspondingly large effective rank when the bound is informative. Finally, the effective rank controls a family of quantum speed limits, $\tau \ge D/(\sqrt{r_{\max}}\,\overline{\mathrm{HSS}})$, demonstrating its utility in bounding evolution times. These results establish the speed ratio as a versatile tool that connects geometry, metrology, and quantum dynamics.

	\appendix
	
	\section{Explicit four-dimensional saturation example}
	
	For completeness, we present an explicit $d=4$ example that saturates the upper bound for even rank. Consider
	\begin{equation}
		G=\sigma_x\oplus\sigma_x
		=
		\begin{pmatrix}
			0&1&0&0\\
			1&0&0&0\\
			0&0&0&1\\
			0&0&1&0
		\end{pmatrix}
	\end{equation}
	and
	\begin{equation}
		\rho_0=
		\begin{pmatrix}
			3/8&0&0&0\\
			0&1/8&0&0\\
			0&0&3/8&0\\
			0&0&0&1/8
		\end{pmatrix}
		=
		\begin{pmatrix}
			3/8&0\\
			0&1/8
		\end{pmatrix}
		\oplus
		\begin{pmatrix}
			3/8&0\\
			0&1/8
		\end{pmatrix}.
	\end{equation}
	For the unitary family
	\begin{equation}
		\rho(\phi)=e^{-i\phi G}\rho_0 e^{i\phi G},
	\end{equation}
	the tangent at $\phi=0$ is
	\begin{equation}
		\partial_\phi\rho\big|_{\phi=0}=-i[G,\rho_0].
	\end{equation}
	Since the matrices are block diagonal, it suffices to evaluate one block. Let
	\begin{equation}
		G_1=
		\begin{pmatrix}
			0&1\\
			1&0
		\end{pmatrix},
		\qquad
		\rho_1=
		\begin{pmatrix}
			3/8&0\\
			0&1/8
		\end{pmatrix}.
	\end{equation}
	Then
	\begin{equation}
		G_1\rho_1=
		\begin{pmatrix}
			0&1/8\\
			3/8&0
		\end{pmatrix},
		\qquad
		\rho_1G_1=
		\begin{pmatrix}
			0&3/8\\
			1/8&0
		\end{pmatrix},
	\end{equation}
	so that
	\begin{equation}
		[G_1,\rho_1]
		=
		\begin{pmatrix}
			0&-1/4\\
			1/4&0
		\end{pmatrix},
		\qquad
		-i[G_1,\rho_1]
		=
		\begin{pmatrix}
			0&i/4\\
			-i/4&0
		\end{pmatrix}.
	\end{equation}
	This Hermitian block has eigenvalues $\pm1/4$, and therefore singular values $1/4$ and $1/4$. The full tangent operator consists of two identical blocks, so its four nonzero singular values are all equal to $1/4$. It follows that
	\begin{equation}
		\norm{\partial_\phi\rho}_1=1,
		\qquad
		\norm{\partial_\phi\rho}_2=\frac12,
	\end{equation}
	and hence
	\begin{equation}
		\mathrm{TS}=\frac12,
		\qquad
		\mathrm{HSS}=\frac14,
		\qquad
		\mathcal R=2.
	\end{equation}
	Since the tangent rank is $r=4$, this realizes
	\begin{equation}
		\mathcal R=\sqrt4=\sqrt r,
	\end{equation}
	thus saturating the even-rank upper bound.

\end{document}